\newcommand{\F}{\mathbb{F}}
\newcommand{\E}{\mathbb{E}}
\newcommand{\IP}[2]{IP(#1 || #2)}
\newcommand{\Z}[1]{\IP{a_{#1}}{b_{#1}}}
\newtheorem{theorem}{Theorem}
\newtheorem*{rep@theorem}{\rep@title}
\newcommand{\newreptheorem}[2]{%
\newenvironment{rep#1}[1]{%
 \def\rep@title{#2 \ref{##1}}%
 \begin{rep@theorem}}%
 {\end{rep@theorem}}}
\newtheorem{theo}{Theorem} 
\newtheorem{lemma}[theo]{Lemma}
\newtheorem{proposition}{Proposition}
\newcommand{\be}{\begin{equation}}
\newcommand{\ee}{\end{equation}}
\newcommand{\etal}{\textit{et al.} }
\newcommand{\eps}{\varepsilon}
\newtheorem{definition}{Definition}
\def\multiset#1#2{\ensuremath{\left(\kern-.3em\left(\genfrac{}{}{0pt}{}{#1}{#2}\right)\kern-.3em\right)}}
\renewenvironment{widetext@grid}{%
  \par\ignorespaces
  \setbox\widetext@top\vbox{%
   \vskip15\p@
   \hb@xt@\hsize{%
    \leaders\hrule\hfil
    \vrule\@height6\p@
   }%
   \vskip6\p@
  }%
  \setbox\widetext@bot\hb@xt@\hsize{%
    \vrule\@depth6\p@
    \leaders\hrule\hfil
  }%
  \onecolumngrid
  \let\set@footnotewidth\set@footnotewidth@ii
}{%
  \par
  \twocolumngrid\global\@ignoretrue
  \@endpetrue
}%
\begin{document}

\title{Arbitrarily long relativistic bit commitment}

\author{Kaushik Chakraborty, Andr{\'e} Chailloux, Anthony Leverrier}
\affiliation{Inria, EPI SECRET, B.P. 105, 78153 Le Chesnay Cedex, France}


\begin{abstract}
We consider the recent relativistic bit commitment protocol introduced by Lunghi \textit{et al} [\textit{Phys.~Rev.~Lett.}~2015] and present a new security analysis against classical attacks. In particular, while the initial complexity of the protocol scaled double-exponentially with the commitment time, our analysis shows that the correct dependence is only linear. 
This has dramatic implications in terms of implementation: in particular, the commitment time can easily be made arbitrarily long, by only requiring both parties to communicate classically and perform efficient classical computation.
\end{abstract}

\maketitle

Over the last decades, which witnessed the rapid expansion of quantum information, a new trend has developed: trying to obtain security guarantees based solely on the laws of physics. Perhaps the most compelling example is quantum key distribution \cite{BB84,SBC09} where two distant parties can exploit quantum theory to extract unconditionally secure keys provided that they have access to an untrusted quantum channel and an authenticated classical channel.
However, many cryptographic applications cannot be obtained only with secure key distribution. One important example is two-party cryptography, which deals with the setting where Alice and Bob want to perform a cryptographic task but do not trust each other. This is in contrast with key distribution where Alice and Bob cooperate and fight against a possible eavesdropper. 

Two-party cryptography has numerous applications, ranging from authentication to distributed cryptography in the cloud. These protocols are usually separated into building blocks, called \emph{primitives}. One of the most studied primitives is \emph{bit commitment}, which often gives a strong indication of whether two-party cryptography is possible or not in a given model. 
For example, there are many constructions of bit commitment protocols under computational assumptions \cite{BrassardChCr88, Naor91, HM96, Halevi99}.  It is then natural to ask whether quantum theory can provide security for two-party cryptographic primitives such as bit commitment or oblivious transfer. A general no-go theorem was proved in 1996 by Mayers and Lo-Chau \cite{LC97, May97}. Several attempts were made to circumvent this impossibility result by limiting the storage possibilities of the cheating party \cite{DFSS08,WST08}.
An alternative approach to obtain secure primitives, pioneered by Kent \cite{Kent99}, consists in combining quantum theory with special relativity, more precisely with the physical principle that information cannot propagate faster than the speed of light. 
This has opened the way to new, secure, bit commitment protocols  \cite{Kent11,CK12,Kent12,KTH13}, with the caveat that the commitment time is not arbitrary long in general but depends on the physical distance between the parties or on the number of parties involved. 

A major open question of the field is therefore to design a secure practical bit commitment protocol, for which the commitment time can be increased arbitrarily at a reasonable cost in terms of implementation complexity. 
In this paper, we examine a protocol due to Lunghi \etal \cite{LKB14}, which is itself adapted from based on an earlier proposal of Simard \cite{sim07}. In their recent breakthrough paper, Lunghi \etal showed that it was possible to extend the commitment time by using a multi round generalization of the Simard protocol, and established its security against classical adversaries. Unfortunately, the required resources scale double exponentially with the commitment time, making the protocol impractical for realistic applications. For instance, with the optimal configuration on Earth (meaning that each party has agents occupying antipodal locations on Earth), the commitment time is limited to less than a second. Here, we provide a new security analysis establishing that the dependence is in fact linear, provided that the dishonest player is classical. This implies that arbitrary long commitment times can be achieved even if both parties are only a few kilometers apart. We first present the relativistic bit commitment scheme studied by Lunghi \etal and we will then establish its security. 

{\bf The Lunghi \etal protocol}.--- \label{prot}
We first recall the protocol as well as the security definitions used and timing constraints.  
Both players, Alice and Bob, have agents $\mathcal{A}_1, \mathcal{A}_2$ and $\mathcal{B}_1, \mathcal{B}_2$ present at two spatial locations 1 and 2. 
Let us consider the case where Alice makes the commitment. 
The protocol (followed by honest players) consists of 4 phases: preparation, commit, sustain and reveal. The sustain phase is itself composed of many rounds, and each such round involves a pair of agents (alternating between locations 1 and 2) referred to as the active players. Overall the bit commitment protocol goes as follows. 
\begin{enumerate}
	\item \emph{Preparation phase}: $\mathcal{A}_1,\mathcal{A}_2$ (resp.~$\mathcal{B}_1,\mathcal{B}_2$) share $k$ random numbers $a_1,\dots,a_{k}$ (resp.~$b_1,\dots,b_{k}$) $\in \F_q$, for even $k$. Here, $q$ is a prime power $p^n$ for some prime $p$ and $\F_q$ refers to the Galois field of order $q$. 
	\item \emph{Commit phase}: $\mathcal{B}_1$ sends $b_1$ to $\mathcal{A}_1$, who returns $y_1 = a_1 + (d * b_1)$ where $d \in \{0,1\}$ is the committed bit. 
	\item \emph{Sustain phase}: at round $i$, active Bob sends $b_i \in \F_q$ to active Alice, who returns $y_i = a_i + (a_{i-1} * b_i)$.
	\item \emph{Reveal phase}: $\mathcal{A}_1$ reveals $d$ and $a_k$ to $\mathcal{B}_1$. $\mathcal{B}_1$ checks that $a_k = y_k + (a_{k-1} * b_k)$.
\end{enumerate}
Here, $+$ and $*$ refer to the field addition and multiplication in $\F_q$.

{\bf Security definition}.--- We follow the definitions of Ref.~\cite{LKB14}. The security requirements differ in the case of honest Alice and honest Bob. In the former case, Bob should not be able to guess the committed value right before the reveal phase. The protocol should therefore be \textit{hiding}, and it will actually be perfectly hiding here, meaning that Bob cannot guess the committed bit value better than with a random guess. 
Security for honest Bob is defined differently: the protocol should be \textit{binding}, meaning that Alice should not be able to decide the value of the committed bit after the commit phase. We follow the standard definition for bit commitment (also used in \cite{LKB14}). Let $p_d$ the probability that the Alice successfully reveals bit value $d$. We say that the protocol is $\eps$-binding if $p_0 + p_1 \le 1 + \eps$.

{\bf Timing constraints for the protocol}.---
The two pairs $(\mathcal{A}_1,\mathcal{B}_1)$ and $(\mathcal{A}_2,\mathcal{B}_2)$ are at a certain distance $d$. At each round $j$, there is an \emph{active} (Alice, Bob) pair that performs the protocol while the other, \emph{passive}, pair waits. 
At the end of round $j$, they switch roles and perform round $j+1$. 

We require that round $j$ finishes before any information about $b_{j-1}$ reaches the other Alice. For any $j$, we therefore have the following : active Alice has no information about $b_{j-1}$. This means that $y_j$ is independent of $b_{j-1}$. This will be crucial in order to show security of the protocol. 
	
\includegraphics[width = 8cm]{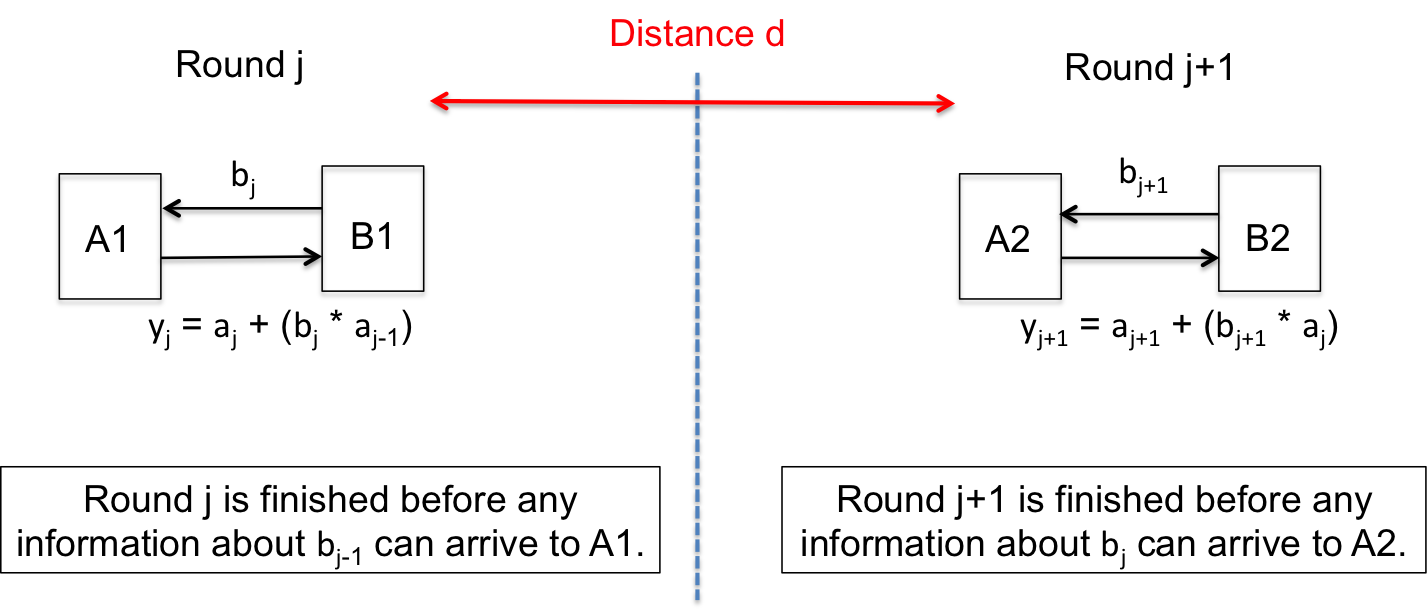}

{\bf Our result}.---
Our main contribution is to present an improved security proof for this protocol. In particular, this allows for implementations of this protocol that last for an (almost) arbitrary amount of time while the previous implementations were only secure for (much less than) a second \cite{LKB14}.

In order to prove the security of the protocol, we present an inductive argument on the number of rounds of the protocol and show that at each round, the cheating parameter for Alice increases by at most $2^{-(N-1)/2}$, where $N$ is the number of transmitted bits per round. Interestingly, the proof involves the study of $\mathrm{CHSH}_q$, which is a generalization of the $\mathrm{CHSH}$ game in the field $\F_q$.
Lunghi \etal  also studied an extension of the $\mathrm{CHSH}_q$ game, which they called ``Number on the Forehead game''. However, their security proof quickly becomes inefficient as the number of rounds increases.

{\bf The $\mathbf{CHSH_q}$ game}.---
A crucial tool of our security proof is the analysis of the CHSH$_{q}$ game introduced by Buhrman and Massar \cite{BM05}. 
This game is a natural generalisation of the CHSH game to the field $\F_q$, where two non-communicating parties, Alice and Bob, are each given an input $x$ and $y$ chosen uniformly at random from $\F_q$, and must output two numbers $a, b \in \F_q$. They win the game whenever the condition $a + b = x *y$ is satisfied. The
$\mathrm{CHSH}_q$ game has been much less studied in the litterature \cite{BM05, LLP10, LKB14} than its $q=2$ variant (see \cite{BCP14} for a recent review on nonlocality). A recent result by Bravarian and Shor \cite{BS15} establishes rather tight bounds on the classical and quantum values of the $\mathrm{CHSH}_q$ game. In particular, for prime or odd power of prime $q$ , one has:
$$\omega(\mathrm{CHSH}_q) =O(q^{-1/2-\eps_0}), \:
\omega^*(\mathrm{CHSH}_q) \leq  \frac{q-1}{q}\frac{1}{\sqrt{q}} + \frac{1}{q},
$$
for some absolute constant $\eps_0>0$.

These results hold only for a uniform input distribution. In order to use our inductive technique, we need to bound the value of this game for unbalanced inputs. It appears that the result of Bavarian and Shor doesn't easily extend to this setting. We therefore developed new proof techniques that are based on using non-signaling constraints for the study of classical strategies.

Let us consider a family of games, denoted by $\mathrm{CHSH}_q(p)$, where games are parametrized by the probability distribution $\{p_x\}_{x \in \F_q}$ for Alice's input $x$ satisfying the constraint $\max_x p_x \leq p$. For these games, Bob's input distribution is uniform over $\F_q$. In particular, $\mathrm{CHSH}_q(1/q) = \{ \mathrm{CHSH}_q \}$.
The special case with $q=2$ was considered in \cite{LLP10} where the following results are proved:
\begin{align*}
&\omega(\mathrm{CHSH}_2(p)) = (1+p)/2, \\
&\omega^*(\mathrm{CHSH}_2(p)) \leq(1+ \sqrt{p^2 +(1-p)^2})/2.
\end{align*}
Note that for $q=2$, Alice's input distribution is entirely determined by the value of $p$.
In order to prove upper bounds on the value of games in $\mathrm{CHSH}_q(p)$, we show that if Alice and Bob can win such a game with high probability then Alice has a method to obtain some information about Bob's input, something that is prohibited by the non-signaling principle. This technique doesn't directly extend to the quantum setting because Alice's method requires her to perform her game strategy for different inputs, which could disturb the underlying shared entangled state. 

Our main technical result is an upper bound on the classical value for games in $\mathrm{CHSH}_q(p)$.
 
\begin{lemma}
\label{lemma}
	For any game $G \in \mathrm{CHSH}_q(p)$, we have 
	\begin{align}
	\omega(G) \le p + \sqrt{\frac{2}{q}}.
	\end{align}
\end{lemma}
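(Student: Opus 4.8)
The plan is to exploit the non-signaling principle: if a deterministic classical strategy $(a(x), b(y))$ wins $G$ with probability $\omega$, then Alice can use knowledge of her own input together with the winning condition to extract information about Bob's input $y$, and we will quantify this to force $\omega$ to be small unless $p$ is large. Since the value of $G$ is attained by a deterministic strategy, I fix functions $a:\F_q\to\F_q$ and $b:\F_q\to\F_q$ and write the winning probability as $\omega(G) = \sum_{x}p_x\,\frac1q\,\bigl|\{y : a(x)+b(y) = x*y\}\bigr|$. The key reformulation is that for each fixed $x$, the set of $y$ on which the pair wins is $S_x := \{y : b(y) = x*y - a(x)\}$, i.e.\ the set of inputs where Bob's output function $b$ agrees with the affine-in-$y$ function $y\mapsto x*y - a(x)$.

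The heart of the argument is a collision/counting bound on how large these sets $S_x$ can be \emph{on average}. For two distinct inputs $x\neq x'$, the functions $y\mapsto x*y-a(x)$ and $y\mapsto x'*y-a(x')$ agree on at most one point of $\F_q$ (their difference is $(x-x')*y - (a(x)-a(x'))$, a nonzero-slope affine function, with a unique root). Hence $|S_x\cap S_{x'}|\le 1$, so the sets $S_x$ are ``almost disjoint.'' I would then bound $\sum_x |S_x|$ — or rather the weighted quantity $\sum_x p_x |S_x|$ — using this near-disjointness. Concretely, consider $\sum_{y}\bigl(\sum_{x}\mathbbm{1}[y\in S_x]\bigr)$: each $y$ lies in $S_x$ iff $b(y) = x*y - a(x)$; thinking of this as a constraint on $x$ given $y$, for fixed $y\neq 0$ this pins down $x$ uniquely (it's $x*y = b(y)+a(x)$, but $a(x)$ still depends on $x$, so this needs the pairwise bound instead). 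The clean route is: let $c_y := |\{x : y\in S_x\}|$; the pairwise intersection bound gives $\sum_x |S_x|^2 = \sum_x\sum_{x'}|S_x\cap S_{x'}| \le \sum_x |S_x| + q(q-1)$, and combined with Cauchy--Schwarz this controls $\sum_x|S_x|$ and then $\sum_x p_x|S_x|$ via $\max_x p_x\le p$. Carrying the constants through should yield $\omega(G) \le p + \sqrt{2/q}$ after splitting the sum into the ``heavy'' contribution (bounded by $p\cdot\frac1q\sum_y c_y$-type term giving $p$) and the ``tail'' (bounded by the $\sqrt{2/q}$ term from the near-orthogonality count).

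The main obstacle I anticipate is bookkeeping the two regimes correctly: one wants the $p$ term to come essentially for free (from $\sum_x p_x \cdot (\text{something} \le 1)$, reflecting that Alice could always just ``luck into'' her heaviest input) and the $\sqrt{2/q}$ term to come from the geometric/non-signaling count, and getting the precise constant $\sqrt 2$ rather than something worse requires choosing the split threshold and applying Cauchy--Schwarz tightly. A secondary subtlety is the edge case $y=0$ (where $x*y = 0$ for all $x$, so the winning condition degenerates to $a(x)+b(0)=0$, potentially letting many $x$ win simultaneously on this single input), which contributes at most $\frac1q$ to $\omega(G)$ and must be handled separately and absorbed into the bound. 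I would also double-check whether $q$ even or odd matters here; I expect the argument to be characteristic-agnostic since it only uses that a nonzero-slope affine function over $\F_q$ has a unique root.
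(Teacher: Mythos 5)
Your core idea is the right one and is in fact the same as the paper's, just in combinatorial rather than physical clothing: the collision bound $|S_x\cap S_{x'}|\le 1$ for $x\ne x'$ is exactly the paper's observation that winning on two distinct Alice-inputs $x,x'$ for the same $y$ forces $y=(f(x)-f(x'))*(x-x')^{-1}$, and the global count ``each pair $(x,x')$ collides on exactly one $y$'' is precisely the paper's non-signaling identity $\sum_y S_y=1$ for Alice's guessing strategy. So the two arguments are two phrasings of one double-counting/second-moment bound; yours dispenses with the non-signaling narrative, the paper's dispenses with the set-system language.

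However, two of your bookkeeping steps as written would fail. First, $\sum_x\sum_{x'}|S_x\cap S_{x'}|$ equals $\sum_y c_y^2$, not $\sum_x|S_x|^2$; these are different second moments. Second, routing through the unweighted count and then invoking $\sum_x p_x|S_x|\le p\sum_x|S_x|$ is far too lossy: $\sum_x|S_x|$ can be of order $q^{3/2}$, so that chain gives nothing near $p+\sqrt{2/q}$. The repair is to keep the weights inside the second moment from the start, which also makes your heavy/tail split and the special treatment of $y=0$ unnecessary. Set $w_y:=\sum_x p_x\mathbbm{1}[y\in S_x]$, so that $\omega(G)=\frac1q\sum_y w_y$. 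Then
$$\sum_y w_y^2=\sum_{x,x'}p_xp_{x'}|S_x\cap S_{x'}|\;\le\; p\sum_x p_x|S_x|+\sum_{x\ne x'}p_xp_{x'}\;\le\; pq\,\omega(G)+1,$$
and Cauchy--Schwarz over $y$ gives $\left(q\,\omega(G)\right)^2\le q\left(pq\,\omega(G)+1\right)$, i.e.\ $\omega(G)^2\le p\,\omega(G)+1/q$, hence $\omega(G)\le\frac12\left(p+\sqrt{p^2+4/q}\right)\le p+q^{-1/2}\le p+\sqrt{2/q}$. This is the global version of the paper's pointwise inequality $(\omega^y)^2\le p\,\omega^y+2S_y$ followed by concavity of the square root, and done this way it even saves the factor $\sqrt2$. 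The case $y=0$ needs no separate handling: many $x$ may win on $y=0$, but each pair $(x,x')$ still collides on exactly one $y$, which is all the bound uses; likewise nothing depends on the characteristic of $\F_q$.
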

\begin{proof}
Fix a game $G \in \mathrm{CHSH}_q(p)$. As usual, the classical value of the game can always be achieved with a deterministic strategy, meaning that without loss of generality, Alice and Bob's strategies can be modeled by functions $f$ and $g$, namely: $a = f(x)$ and $b=g(y)$. 
Define the variable $r_x^{y}$ equal to $1$ if $f(x) + g(y) = x *y$ and $0$ otherwise. 

Our proof is by contradiction: if $\omega(G)$ is too large, then Alice could use her box to obtain some information about $y$, which is prohibited by non signaling.
More precisely, consider the following strategy for Alice: pick a random pair of distinct inputs $x, x'$ according to the distribution $\{p\}_{x \in \F_q}$, i.e. with probability $p_x p_x'/D$ where $D = \sum_{x \ne x'} p_x p_x'$, and output the guess $\hat{y}$ for $y$ defined by $\hat{y} = (f(x)-f(x'))*(x-x')^{-1}$.
Denote by $S_y$ the probability of correctly guessing the value $y$. Non signaling imposes that $\mathbbm{E}_y[S_y] = 1/q$, since the value $y$ is uniformly distributed in $\F_q$.

On the other hand, we note that if the game $G$ is won for both inputs $(x,y)$ and $(x',y)$, then Alice's strategy outputs the correct value for $y$. Indeed, winning the game implies that $f(x)-f(x') = (x-x')*y$ and therefore $\hat{y} = y$.
One immediately obtains a lower bound on $S_y$:
$$S_y \geq \frac{1}{D} \sum_{x \ne x'} p_x r_x^y p_x' r_{x'}^y \geq  \sum_{x \ne x'} p_x r_x^y p_x' r_{x'}^y .$$
Consider the quantity $\omega^y = \sum_x p_x r_x^y$. It satisfies:
$$(\omega^y)^2 \leq \sum_x p_x^2 (r_x^y)^2 + 2 S_y = \sum_x (p_x)^2 r_x^y + 2 S_y \leq p \omega^y + 2s_y,$$
where we used that $(p_x)^2 \leq \left(\max_{x} \{p_x\} \right) p_x \leq p p_x$. 
This implies that 
$$ \omega^y \leq \frac{1}{2}\left( p  +\sqrt{p^2 + 8S_y}\right) \leq  p + \sqrt{2S_y},$$
where the last inequality results from the concavity of the square-root function.

Finally, $\omega(G) = \mathbbm{E}_y[\omega^y]$ by definition, and therefore:
$$\omega(G) \leq p + 2\mathbbm{E}_y [\sqrt{S_y}] \leq p + \sqrt{2} \sqrt{\mathbbm{E}_y[S_y]} \leq p +\sqrt{2/q},$$
which concludes the proof.
\end{proof}

{\bf  Security of the protocol}.--- 
The perfect hiding property of this protocol has already been discussed in \cite{LKB14}. Indeed, at any point before the reveal phase, the Bobs have no information about the committed bit $d$. Our main contribution is the following binding property of this protocol.

\begin{theorem}
\label{theorem}
	This relativistic bit commitment scheme is $\eps$-binding with $\eps \le 2k\sqrt{\frac{2}{q}}$ where $k$ is the number of rounds used in the protocol. 
\end{theorem}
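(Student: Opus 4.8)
The plan is to prove the binding property by induction on the number of rounds, using Lemma~\ref{lemma} as the engine of the inductive step. Throughout, the commit phase is treated as the first ``round'' (structurally it is a sustain round with the formal value $a_0^{(d)} := d$), and the sustain phase supplies rounds $2,\dots,k$. First I would reduce to a clean deterministic picture: a cheating Alice is the pair of agents $\mathcal{A}_1,\mathcal{A}_2$ together with arbitrary data shared before the protocol; since $p_0+p_1$ is affine in the convex mixture of strategies, it suffices to bound it for strategies that are deterministic given the shared data, so I condition on that data and treat each agent as a deterministic function of the $b_j$'s it has received. The only structural input from relativity I will use is the stated timing fact: $y_j$ is independent of $b_{j-1}$. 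I would then unwind the verification: from the revealed bit $d$ and the transcript, Bob's acceptance is equivalent to the revealed value matching the quantity $a_k^{(d)}$ defined by the recursion $a_0^{(d)} := d$, $a_i^{(d)} := y_i - a_{i-1}^{(d)} * b_i$ (up to the field sign conventions of the protocol). Hence ``Alice can open $d$ after round $i$'' means her agents can present, consistently with the round-$i$ transcript, the correct ``level-$i$ value'' $a_i^{(d)}$; write $p_d^{(i)}$ for the maximal probability (over uniform $b_1,\dots,b_i$) that they can do so, so that $p_0+p_1 \le p_0^{(k)}+p_1^{(k)}$.

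The heart of the argument is the inductive step $p_0^{(i)}+p_1^{(i)} \le p_0^{(i-1)}+p_1^{(i-1)} + 2\sqrt{2/q}$, together with a base case supplied by the commit round. To prove the step I would extract a game in $\mathrm{CHSH}_q(p)$ hidden inside round $i$. At round $i$ the active agent receives the uniform input $b_i$ and outputs $y_i$; upgrading a level-$(i-1)$ value to a correct level-$i$ value for $d$ forces $y_i = a_i^{(d)} + a_{i-1}^{(d)} * b_i$. But $a_{i-1}^{(d)}$ was produced at round $i-1$ by the \emph{other} agent as a function of $b_{i-1}$, about which the round-$i$ active agent knows nothing. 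Setting $x := b_{i-1}$, $y := b_i$, and rewriting the constraint (absorbing the $y_{i-1},y_i$ offsets into the two agents' replies) in the form $f(x) + g(y) = x * y$, the probability of carrying the commitment through round $i$ is at most $\omega(G)$ for a game $G \in \mathrm{CHSH}_q(p)$ whose input parameter $p = \max_x \Pr[a_{i-1}^{(d)} = x]$ is controlled by $p_d^{(i-1)}$. Feeding this into Lemma~\ref{lemma} loses $\sqrt{2/q}$ per committed value, hence $2\sqrt{2/q}$ once both branches $d \in \{0,1\}$ are summed; unrolling over the $k$ rounds gives $p_0^{(k)}+p_1^{(k)} \le 1 + 2k\sqrt{2/q}$, which is the claimed bound.

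The delicate part, and where I expect most of the work, is the bookkeeping behind the previous paragraph. One must (i) define the ``level-$i$ value'' and the quantities $p_d^{(i)}$ so that summing over $d=0,1$ yields a single scalar that the $\mathrm{CHSH}_q$ bound genuinely controls, rather than having the two branches couple in an uncontrolled way (it may help here that the recursion forces the difference $a_i^{(0)} - a_i^{(1)} = \pm\, b_1 * b_2 * \cdots * b_i$, which already has a product-of-inputs flavor); (ii) verify that the input distribution induced on $x = b_{i-1}$ really has $\max_x p_x \le p_d^{(i-1)}$ — this is precisely why the non-uniform game $\mathrm{CHSH}_q(p)$, and Lemma~\ref{lemma} rather than the bound for uniform inputs, is needed, since $a_{i-1}^{(d)}$ is a nonlinear image of the uniform $b_{i-1}$ and is therefore not uniform; and (iii) handle the two boundary rounds with care: the commit round, where $a_0^{(d)}$ is the fixed value $d$ rather than a previous-round output, which is exactly what makes the base case binding, and the last sustain round, which is executed by a different agent than the one performing the reveal, so that identifying the two $\mathrm{CHSH}_q$ players there requires invoking the timing constraints once more.

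Finally, I would assemble the pieces: the reduction to deterministic strategies, the telescoping of the inductive inequality over the $k$ rounds with the commit-round base case, and the observation that a successful reveal requires a correct level-$k$ value, together give $p_0 + p_1 \le p_0^{(k)} + p_1^{(k)} \le 1 + 2k\sqrt{2/q}$, establishing that the scheme is $\eps$-binding with $\eps \le 2k\sqrt{2/q}$.
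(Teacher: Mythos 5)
Your proposal follows essentially the same route as the paper: reduce to deterministic strategies, track for each $d$ the passive agent's maximal probability of guessing the level-$i$ value $a_i$, view each round as a game in $\mathrm{CHSH}_q(p)$ with $p$ given by the previous round's guessing probability (and $p=1/2$ at the commit round, where $d$ itself is the skewed input), and telescope the $\sqrt{2/q}$ loss from Lemma~\ref{lemma} over the $k$ rounds. The only points to tighten are that the non-uniform game input is the value $a_{i-1}^{(d)}$ rather than $b_{i-1}$ itself, and that the reduction must be carried out conditionally on the history $(d,b_1,\dots,b_{i-2})$ and then averaged --- both of which your bookkeeping remarks already anticipate, and both of which the paper's appendix handles explicitly via the per-history independence parameter.
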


\begin{proof}
We present here the main elements of the proof. The technical details can be found in the Appendix. Let us fix a cheating strategy for Alice, which consists of the messages $y_j$ that het agents will send depending on the current history and the bit $d$ she wants to decommit to. During the reveal phase, Alice successfully reveals $d$ if $\mathcal{A}_1$ sends the correct $a_k$ to Bob. For a fixed cheating strategy, $a_k$ is a function of $d,b_1,\dots,b_k$. However, during the reveal phase, $\mathcal{A}_1$ has no information about $b_k$. Therefore, $\mathcal{A}_1$ will not be able to reveal $a_k$ if it has too much dependence in $b_k$ on average on $d$ . We show that this is indeed the case. 

Let $P^d_j$ the maximal probability that the passive players guesses $a_j$, given $d$. We have by definition $$
P^0_k + P^1_k = 1 + \eps.$$
In order to prove our statement, we show the following:
\begin{itemize}
	\item $P^0_1 + P^1_1 \le 1 + 2\sqrt{\frac{2}{q}}$.
	\item For any $d$ and $j$, $P^d_j \le P^d_{j-1} + \sqrt{\frac{2}{q}}$.
\end{itemize}

To prove the first point, the idea is to reduce $\mathcal{A}_2$'s strategy for guessing $a_1$ into a strategy for $\mathrm{CHSH}_q(1/2)$. $\mathcal{A}_1$ receives $b_1$ and outputs $y_1$ which is independent of $d$. $\mathcal{A}_2$ knows $d$ and outputs $a_1$. $\mathcal{A}_2$ outputs the correct $a_1$ when $a_1 + y_1 = d * b_1$. For an average $d$, this can happen with probability at most $\mathrm{CHSH}_q(1/2) \le \frac{1}{2} + \sqrt{\frac{2}{q}}$. Therefore, we have 
$$
\frac{1}{2}\left(P^0_1 + P^1_1\right) \le \mathrm{CHSH}_q(1/2) \le \frac{1}{2} + \sqrt{\frac{2}{q}}
$$
which gives the desired result. The idea here is to reduce passive Alice's strategy for guessing $a_1$ to a strategy for winning $\mathrm{CHSH}_q(1/2)$.

Similarly, fix a round $j$ and $d$. We can reduce passive Alice's strategy for guessing $a_j$ to a strategy for winning $\mathrm{CHSH}_q(P^d_{j-1})$. Indeed, active Alice knows $b_{j}$ and outputs $y_j$. Passive Alice knows $a_{j-1}$ and outputs a guess $a_j$. She outputs the correct value if and only if $a_j + y_j = b_j * a_{j-1}$.

This corresponds to an instance of $\mathrm{CHSH}_q$ where $b_j \in \F_q$ is random and where active Alice (we consider here active Alice at round $j$, which is the passive Alice at round $j-1$) can guess $a_{j-1}$ with probability $P^d_{j-1}$. This means that we can reduce passive Alice's strategy for guessing $a_j$ to a strategy for winning a certain game in $\mathrm{CHSH}_q(P^d_{j-1})$. Using Proposition \ref{Proposition:CHSHq}, we obtain $P^d_{j} \le P^d_{j-1} + \sqrt{\frac{2}{q}}$. Putting all this together, we can conclude that $P^0_k + P^1_k = 1 + 2k \sqrt{\frac 2 q}$.
\end{proof}

{\bf Experimental perspectives and open questions}.--- Let us discuss the security of the protocol in realistic conditions. Theorem \ref{theorem} shows that $m = \eps \sqrt{q/2}$ rounds can be performed for a given level of security $\epsilon$. 
In particular, if the distance between $\mathcal{A}_1/\mathcal{B}_1$ and $\mathcal{A}_2/\mathcal{B}_2$ is $d$, then the commitment can be sustained for a time $$T =  (d/c) \ \eps  \sqrt{q/2},$$
where $c$ is the speed of light.
In particular, provided that $q \gg 1/\eps^2$, the commitment time can be made arbitrary long. 
For instance, taking 128 bits of security, i.e. $\eps = 2^{-128}$ and $q=2^{340}$ gives $T \approx 3\cdot 10^{12} (d/c)$, that is approximately 30 years for a distance $d =100$ km. In this example, the messages sent at each round only consist of 340 bits.

It is also possible to reduce the distance between $\mathcal{A}_1/\mathcal{B}_1$ and $\mathcal{A}_2/\mathcal{B}_2$, at the condition that both the computation time and the communication time between $\mathcal{A}_i$ and $\mathcal{B}_i$ remains negligible compared to $d/c$. This is necessary to enforce the non-signaling condition of the $\mathrm{CHSH}_q$ game. 
For instance, if the computation time is on the order of the microsecond, then $d$ should be at least 300 meters.

Let us conclude by mentioning a few open questions. Certainly the most pressing one concerns the security of the protocol against quantum adversaries. A first step in that direction would be to obtain tight upper bounds on the entangled value $\omega^*$ of games in $\mathrm{CHSH}_q(p)$. 
Another outstanding problem is whether the bit-commitment protocol of \cite{LKB14} can be used to obtain an protocol for Oblivious-Transfer \cite{Kil88}. In particular, this would pave the way for arbitrary two-party cryptography with security based on the non-signaling principle. 
Finally, it would be particularly interesting to understand whether 2 agents are indeed necessary for each player, or whether the second agent could for instance be replaced by assuming that the spatial positions of Alice and Bob are known.

{\bf Note added}.--- In an independent and concurrent work, Fehr and Fillinger \cite{FF15} proved a general composition theorem for two-prover commitments which implies a similar bound on the security of the Lunghi \etal protocol than the one derived here.

\newpage
\appendix

\begin{widetext}

\newpage

\section{Detailed proof of Theorem \ref{theorem}}

In this Appendix, we give a formal proof of Theorem \ref{theorem}. We consider the case of a cheating Alice. At round $j$, active Alice receives a string $b_j \in \F_q$ and sends back a message $y_j$. From the relativistic constraints, we know that this message $y_j$ is totally independent of $b_{j-1}$. We can therefore view $y_j$ as a function of $d,b_1,\dots,b_{j-2},b_j$. We also recursively define the functions $a_j = y_j + (b_j * a_{j-1})$, with $a_0 = d$. These are functions of $d,b_1,\dots,b_j$.

Note that if Alice's performs a probabilistic cheating strategy, her success probability will be the average of the success probabilities for each possible strategy she performs. It is therefore sufficient to bound Alice's cheating probability over all deterministic strategies. 
Let us then consider a deterministic cheating strategy for Alice: it is fully determined by the functions $y_j$, as well as a function $G(d,b_1,\dots,b_{k-1})$ that $\mathcal{A}_1$ uses to guess $a_k$ during the reveal phase. Alice successfully reveals $d$ iff
$[G(d,b_1,\dots,b_{k-1}) = a_k(d,b_1,\dots,b_k)]$. Therefore, we have 

\begin{align*}
1 + \eps & = \Pr[\mbox{Alice successfully reveals } d = 0] + 
\Pr[\mbox{Alice successfully reveals } d = 1] \\
& = \Pr_{b_1,\dots,b_k} [G(0,b_1,\dots,b_{k-1}) = a_k(0,b_1,\dots,b_k)]
+ \Pr_{b_1,\dots,b_k} G(1,b_1,\dots,b_{k-1}) = a_k(1,b_1,\dots,b_k)] \\ & = 2 \Pr_{d,b_1,\dots,b_k} [G(d,b_1,\dots,b_{k-1}) = a_k(d,b_1,\dots,b_k)].
\end{align*}

Intuitively, Alice will be able to win if the function $a_k$ is independent of $b_k$, on average on $d$ and the other $b_i$. We will prove that $a_k$ has some large dependence on $b_k$, which will limit Alice's cheating possibilities. We will actually show by induction that for each $j$, the function $a_j$ has some large dependency on $b_j$.

We define the independence parameter of function $f$ for a variable $y$ as follows :
\begin{definition}[Independence parameter of a variable on a function]
Let $f: \mathcal{X} \times \mathcal{Y} \to \mathcal{Z}$ be a function. The \emph{Independence Parameter} of $f$ for variable $y \in \mathcal{Y}$, denoted by $\IP{f}{y}$, is defined by

\begin{align}
\IP{f}{y} := \max_{g : \mathcal{X} \to \mathcal{Z} } \left[ \mathrm{Pr}_{x,y} \left[ f(x,y) = g(x)\right]\right],
\end{align} 
where we use the uniform measure on $\mathcal{X} \times \mathcal{Y}$.
\end{definition}

By definition, the case $\IP{f}{y} = 1$ corresponds to a function $f$ independent of $y$. If $\IP{f}{y} <1$, then the function $f$ depends on $y$. The definition of the independence parameter immediately yields $1 + \eps = 2 \Z{k}$, and our goal is therefore to obtain a tight upper bound for $\Z{k}$.

We prove the following :

\begin{proposition}
\label{Proposition:CHSHq}
$ \forall j, \ \IP{a_j}{b_j} \leq \frac{1}{2} + j \sqrt{\frac 2 q}.
$
\end{proposition}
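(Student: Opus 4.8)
The plan is to prove Proposition~\ref{Proposition:CHSHq} by induction on $j$, using Lemma~\ref{lemma} (the classical value bound for $\mathrm{CHSH}_q(p)$) at each step to convert a guessing strategy for $a_j$ into a winning strategy for a suitable game in $\mathrm{CHSH}_q(p)$. The key identity is $a_j = y_j + (b_j * a_{j-1})$, which can be rewritten as $a_j + (-y_j) = b_j * a_{j-1}$; since $-y_j$ is a function of $(d, b_1,\dots,b_{j-2}, b_j)$ and $a_{j-1}$ is a function of $(d, b_1,\dots,b_{j-1})$, the two ``players'' in the reduction are the process producing $y_j$ (who sees $b_j$ but not $b_{j-1}$) and the process producing (a guess for) $a_{j-1}$ (who sees $b_{j-1}$ but not $b_j$). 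The hidden variables $d, b_1,\dots,b_{j-2}$ are shared randomness available to both.

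First I would set up the base case $j=0$: by definition $a_0 = d$, which does not depend on $b_0$, but $b_0$ isn't really a variable here — the genuine base case of interest is $j=1$, where $a_1 = y_1 + (b_1 * a_0) = y_1 + (b_1 * d)$ with $y_1$ independent of $b_1$ (indeed $y_1$ depends only on $d$ since there is no $b_0$). Guessing $a_1$ from $(d, b_1,\dots,b_{k-1})$-data available to passive Alice, who in fact only has $d$ (and shared randomness), amounts to: active Alice on input $b_1$ outputs $y_1$, passive Alice on ``input'' $d \in \{0,1\}$ outputs a guess, and they win iff guess $+\, (-y_1) = b_1 * d$. This is an instance of $\mathrm{CHSH}_q$ where the ``Alice'' input is $d$, uniform on a two-element set, so $\max_d \Pr[d] = 1/2$; thus it lies in $\mathrm{CHSH}_q(1/2)$ and Lemma~\ref{lemma} gives $\IP{a_1}{b_1} \le 1/2 + \sqrt{2/q}$, which is the $j=1$ case.

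For the inductive step, assume $\IP{a_{j-1}}{b_{j-1}} \le \tfrac12 + (j-1)\sqrt{2/q} =: P$. A strategy guessing $a_j$ from the data not containing $b_j$ — that is, a function $g(d, b_1,\dots,b_{j-1})$ — together with the $y_j$ functions, yields a pair of strategies for the relation $a_j + (-y_j) = b_j * a_{j-1}$: ``Bob'' has input $b_j$ (uniform on $\F_q$) and outputs $-y_j$; ``Alice'' has input $a_{j-1}$, produced from $(d,b_1,\dots,b_{j-1})$, and outputs $g(\cdot)$ computed from the same data. The subtlety is that $a_{j-1}$ is itself a function of $b_{j-1}$ and the other hidden variables, so ``Alice's input'' in the $\mathrm{CHSH}_q$ reduction is not literally uniform and not literally independent of Bob's side — this is where the inductive hypothesis enters: the best one can do to correlate the output with $a_{j-1}$ given only the non-$b_{j-1}$ data is $P$, so after fixing the shared randomness optimally, the effective input to the $\mathrm{CHSH}_q$ game is a value that can be predicted with probability at most $P = \max_x p_x$, placing the game in $\mathrm{CHSH}_q(P)$. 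Applying Lemma~\ref{lemma} gives $\IP{a_j}{b_j} \le P + \sqrt{2/q} = \tfrac12 + j\sqrt{2/q}$, closing the induction.

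The main obstacle I anticipate is making the reduction in the inductive step fully rigorous: one must carefully argue that the ability of passive Alice at round $j$ to guess $a_{j-1}$ — with $a_{j-1}$ depending on $b_{j-1}$, which passive Alice at round $j$ does possess — is correctly captured by the quantity $P = \IP{a_{j-1}}{b_{j-1}}$, i.e. that the relevant ``input distribution'' for the $\mathrm{CHSH}_q(P)$ game is exactly the distribution of $a_{j-1}$ conditioned on the information shared between the two effective provers, and that $\max$ over this distribution is bounded by $P$. This requires unwinding the definition of $\IP{\cdot}{\cdot}$ and choosing the right partition of variables into ``shared randomness'', ``Alice's private input'', and ``Bob's private input'' so that the non-signaling structure of $\mathrm{CHSH}_q$ is respected; the relativistic timing constraint ($y_j$ independent of $b_{j-1}$) is precisely what guarantees this partition exists. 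The remaining steps — the base case and the final summation $P^0_k + P^1_k = 2\,\IP{a_k}{b_k} \le 1 + 2k\sqrt{2/q}$ — are then routine.
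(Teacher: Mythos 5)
Your plan follows the same route as the paper: induction on $j$, with each step reducing the task of guessing $a_j$ without $b_j$ to a strategy for a game in the family $\mathrm{CHSH}_q(p)$ and invoking Lemma~\ref{lemma}; the base case is handled exactly as in the paper. However, there is one genuine gap in your inductive step, located precisely at the point you flag as the ``main obstacle''. You propose to place the round-$j$ game in $\mathrm{CHSH}_q(P)$ with $P = \IP{a_{j-1}}{b_{j-1}}$, and to justify this by arguing that the maximum probability of the effective input distribution (the distribution of $a_{j-1}$ given the information shared by the two provers) is bounded by $P$. That claim is false in general: $P$ is the \emph{average} over histories $h=(d,b_1,\dots,b_{j-2})$ of the conditional predictability $p^h := \max_c \Pr_{b_{j-1}}[a_{j-1}(h,b_{j-1})=c]$, and for a particular history this conditional maximum can far exceed $P$ (e.g.\ $a_{j-1}(h,\cdot)$ could be constant for half of the histories and injective for the other half). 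So the single game you construct need not lie in $\mathrm{CHSH}_q(P)$, and ``fixing the shared randomness optimally'' does not rescue the argument, since the worst-case history can carry a parameter close to $1$.

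The paper repairs exactly this by reversing the order of operations: it conditions on the full history $h$, observes that the dependence of $a_j$ on $b_{j-1}$ factors through $a_{j-1}(h,b_{j-1})$ (so the optimal guess may be taken to be a function of $a_{j-1}$ alone, which is also needed for Bastian's strategy to be a legal function of his game input rather than of $b_{j-1}$ itself), builds for each $h$ a game in $\mathrm{CHSH}_q(p^h)$ with the history-dependent parameter, applies Lemma~\ref{lemma} per history to get $\Z{j}^h \le p^h + \sqrt{2/q}$, and only then averages over $h$. The induction closes because the bound of Lemma~\ref{lemma} is affine in $p$, so the expectation passes through: $\E_h\bigl[p^h + \sqrt{2/q}\bigr] = \IP{a_{j-1}}{b_{j-1}} + \sqrt{2/q}$. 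Your reduction is the right one; the missing ingredient is this condition-then-average structure rather than a claim about the unconditional (or worst-case) input distribution.
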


\begin{proof}
We prove the proposition by induction on $j$. 

Let us first consider the base case:
\begin{align}
\Z{1} = \max_{g : \F_q \to \F_q} \Pr_{d,b_1} [a_1(d, b_1) = g(d)]
\end{align}
where $b_1$ is uniformly distributed in $\F_q$ and $d$ is equal to either $0$ or $1$, each with probability $1/2$. Let $g$ the function that maximizes the above expression, which gives $\Z{1} = \Pr_{d,b_1} [a_1(d,b_1) = g(d)]$. We write $a_1(d,b_1) = y_1(b_1) + (b_1 * d)$ for some function $y_1$. We now use the functions $g$ and $y_1$ to construct a strategy for a game $G \in \mathrm{CHSH}_q(1/2)$. We consider the following game between two players Adeline and Bastian :
\begin{itemize}
	\item Adeline receives a random element $X \in F_q$. Bastian receives an element $Y \in F_q$ which is equal to $0$ with probability $1/2$ and $1$ with probability $1/2$.
	\item Their goal is to respectively output $A$ and $B$ in $\F_q$ such that $A + B = X * Y$.
\end{itemize}
The above game is in $\mathrm{CHSH}_q(1/2)$. Intuitively, we mapped $\mathcal{A}_1$ to Adeline and $\mathcal{A}_2$ to Bastian, where the input $X$ corresponds to $b_1$ and the input $Y$ corresponds to $d$. 

We consider the following strategy for this game: Adeline outputs $A = y_1(X)$ and Bastian outputs $B = - g(Y)$. They win the game iff $y_1(X) - g(Y) = X * Y$. Therefore, we have
\begin{align*}
\omega(G) & \ge \Pr_{X,Y} [y_1(X) - g(Y) = X * Y] = \Pr_{X,Y} [a_1(Y,X) + (X * Y) - g(Y) = (X * Y)] \\ & = \Pr_{X,Y} [a_1(Y,X) = g(Y)] = \Z{1}.
\end{align*}
Combining this lower bound on the value $\omega(G)$ of the game with Lemma \ref{lemma} applied to $G \in \mathrm{CHSH}_q(1/2)$ gives $\Z{1} \le \omega(G) \leq \frac{1}{2} + \sqrt{\frac{2}{q}}$, which establishes the base case.

We now move to the induction step and assume that $\Z{j} \leq \frac{1}{2} + j \sqrt{\frac 2 q}$. Let us fix $h:=(d, b_1, \ldots, b_{j-1})$ the history before time $j$. Let us define the independence parameter conditioned on the history  $h$:
$$\Z{j+1}^h = \max_{g_{j+1} : \F_q \rightarrow \F_q} \Pr_{b_j,b_{j+1}}[a_{j+1}(h,b_j,b_{j+1}) = g_{j+1}(b_{j})].
$$
Averaging over $h$ gives back the independence parameter: $\Z{j+1} = \E_h [\Z{j+1}^h]$. 
We write $a_{j+1}(h,b_j,b_{j+1}) = y^h_{j+1}(b_{j+1}) + (b_{j+1} * a_{j}(h,b_j)) $. Notice that the dependence in $b_j$ of the function $a_{j+1}(h,b_j,b_{j+1})$ lies only in the function $a_{j}(h,b_j)$. Therefore, we can write
$$\Z{j+1}^h = \max_{{g}_{j+1} : \F_q \rightarrow \F_q} \Pr_{b_j,b_{j+1}}[a_{j+1}(h,b_j,b_{j+1}) = {g}_{j+1}(a_j(h,b_j))].
$$

 Let $g^h_{j+1}$ be the function that maximizes the expression:
$$\Z{j+1}^h = \Pr_{b_j,b_{j+1}}[a_{j+1}(h,b_j,b_{j+1}) = g^h_{j+1}(a_j(h,b_{j}))].
$$

We now use the functions $y^h_{j+1}$ and $g^h_{j+1}$ to construct a strategy for a game $G^h_{j+1} \in \mathrm{CHSH}_q(\Z{j}^h)$. We consider the following game between two players Adeline and Bastian : 
\begin{itemize}
	\item Adeline receives a random element $X \in F_q$. Bastian receives an element $Y \in F_q$ such that $\Pr[Y = c] = \Pr_{b_j}[a_j(h,b_j) = c]$.
	\item Their goal is to respectively output $A$ and $B$ in $\F_q$ such that $A + B = X * Y$
\end{itemize}

Intuitively, we mapped the active Alice (during round $j+1$) to Adeline and the passive Alice to Bastian, where the input $X$ corresponds to $b_{j+1}$ and the input $Y$ corresponds to $a_j$. Recall that the active Alice has no information about $b_j$ during step $j+1$. Therefore, she can determine $a_j$ with probability at most: 
$\Z{j}^h := \max_c \Pr_{b_j}[a_j(h,b_j) = c] $.
This shows that the above game $G^h_{j+1}$ is in $\mathrm{CHSH}_q(\Z{j}^h)$.

  We consider the following strategy for this game: Adeline outputs $A = y_{j+1}^h(X)$ and Bastian outputs $B = - g_{j+1}^h(Y)$. They win the game iff $y_{j+1}^h(X) - g_{j+1}^h(Y) = X * Y$, which implies that
\begin{align*}
\omega(G^h_{j+1}) & \ge \Pr_{X,Y} [y_{j+1}^h(X) - g^h_{j+1}(Y) = X * Y] \\
& = \Pr_{X,b_j} [y_{j+1}^h(X) - g^h_{j+1}(a_j(h,b_j)) = X * a_j(h,b_j)] 
\qquad  \mbox{where the distribution over both } X \mbox{ and } b_j \mbox{ is uniform} \\
&  = \Pr_{X,b_j} [a_{j+1}(h,b_j,X) + (a_j(h,b_j) * X) - g_{j+1}^h(a_j(b_j)) = (X * a_j(h,b_j))] \\
& = \Pr_{X,b_j} [a_{j+1}(h,b_j,X) = g_{j+1}^h(a_j(h,b_j))]\\
& = \Z{j+1}^h.
\end{align*}
Moreover, Lemma \ref{lemma} shows that  $\omega(G^h_{j+1}) \le \Z{j}^h + \sqrt{\frac{2}{q}}$ since the game $G$ belongs to $\mathrm{CHSH}_q(\Z{j}^h)$.
Combining both inequalities gives:
\begin{align}
\label{eqn}
\Z{j+1}^h \le \Z{j}^h + \sqrt{\frac{2}{q}}.
\end{align}

In order to conclude, notice that $\Z{j} = \E_h [\Z{j}^h]$ and $\Z{j+1} = \E_h [\Z{j+1}^h]$. Taking the expectation of Eq.~\ref{eqn} over the history $h$ finally gives:
$$ \Z{j+1} = \E_h [\Z{j+1}^h] \le \E_h \left[\Z{j}^h + \sqrt{\frac{2}{q}}\right] 
= \Z{j} + \sqrt{\frac{2}{q}} \le \frac{1}{2} + (j+1) \sqrt{\frac{2}{q}}. $$ 
\end{proof}
Proposition \ref{Proposition:CHSHq} implies that $\Z{k} = \frac{1}{2} + k \sqrt{\frac{2}{q}}$, and the discussion at the beginning of the appendix allows us to conclude that the protocol is $\eps$-binding with $\eps = 2 k \sqrt{\frac{2}{q}}$.
\end{widetext}


\begin{thebibliography}{23}
\expandafter\ifx\csname natexlab\endcsname\relax\def\natexlab#1{#1}\fi
\expandafter\ifx\csname bibnamefont\endcsname\relax
  \def\bibnamefont#1{#1}\fi
\expandafter\ifx\csname bibfnamefont\endcsname\relax
  \def\bibfnamefont#1{#1}\fi
\expandafter\ifx\csname citenamefont\endcsname\relax
  \def\citenamefont#1{#1}\fi
\expandafter\ifx\csname url\endcsname\relax
  \def\url#1{\texttt{#1}}\fi
\expandafter\ifx\csname urlprefix\endcsname\relax\def\urlprefix{URL }\fi
\providecommand{\bibinfo}[2]{#2}
\providecommand{\eprint}[2][]{\url{#2}}

\bibitem[{\citenamefont{Bennett and Brassard}(1984)}]{BB84}
\bibinfo{author}{\bibfnamefont{C.}~\bibnamefont{Bennett}} \bibnamefont{and}
  \bibinfo{author}{\bibfnamefont{G.}~\bibnamefont{Brassard}},
  \bibinfo{journal}{Proceedings of IEEE International Conference on Computers,
  Systems, and Signal Processing} \textbf{\bibinfo{volume}{175}}
  (\bibinfo{year}{1984}).

\bibitem[{\citenamefont{Scarani et~al.}(2009)\citenamefont{Scarani,
  Bechmann-Pasquinucci, Cerf, Du\ifmmode~\check{s}\else \v{s}\fi{}ek,
  L\"utkenhaus, and Peev}}]{SBC09}
\bibinfo{author}{\bibfnamefont{V.}~\bibnamefont{Scarani}},
  \bibinfo{author}{\bibfnamefont{H.}~\bibnamefont{Bechmann-Pasquinucci}},
  \bibinfo{author}{\bibfnamefont{N.~J.} \bibnamefont{Cerf}},
  \bibinfo{author}{\bibfnamefont{M.}~\bibnamefont{Du\ifmmode~\check{s}\else
  \v{s}\fi{}ek}},
  \bibinfo{author}{\bibfnamefont{N.}~\bibnamefont{L\"utkenhaus}},
  \bibnamefont{and} \bibinfo{author}{\bibfnamefont{M.}~\bibnamefont{Peev}},
  \bibinfo{journal}{Rev. Mod. Phys.} \textbf{\bibinfo{volume}{81}},
  \bibinfo{pages}{1301} (\bibinfo{year}{2009}),
  \urlprefix\url{http://link.aps.org/doi/10.1103/RevModPhys.81.1301}.

\bibitem[{\citenamefont{Brassard et~al.}(1988)\citenamefont{Brassard, Chaum,
  and Cr{\'e}peau}}]{BrassardChCr88}
\bibinfo{author}{\bibfnamefont{G.}~\bibnamefont{Brassard}},
  \bibinfo{author}{\bibfnamefont{D.}~\bibnamefont{Chaum}}, \bibnamefont{and}
  \bibinfo{author}{\bibfnamefont{C.}~\bibnamefont{Cr{\'e}peau}},
  \bibinfo{journal}{Journal of Computer and System Sciences}
  \textbf{\bibinfo{volume}{37}}, \bibinfo{pages}{156} (\bibinfo{year}{1988}).

\bibitem[{\citenamefont{Naor}(1991)}]{Naor91}
\bibinfo{author}{\bibfnamefont{M.}~\bibnamefont{Naor}},
  \bibinfo{journal}{Journal of Cryptology} \textbf{\bibinfo{volume}{4}},
  \bibinfo{pages}{151} (\bibinfo{year}{1991}).

\bibitem[{\citenamefont{Halevi and Micali}(1996)}]{HM96}
\bibinfo{author}{\bibfnamefont{S.}~\bibnamefont{Halevi}} \bibnamefont{and}
  \bibinfo{author}{\bibfnamefont{S.}~\bibnamefont{Micali}}, in
  \emph{\bibinfo{booktitle}{Advances in Cryptology - {CRYPTO} '96, 16th Annual
  International Cryptology Conference, Santa Barbara, California, USA, August
  18-22, 1996, Proceedings}} (\bibinfo{year}{1996}), pp.
  \bibinfo{pages}{201--215},
  \urlprefix\url{http://dx.doi.org/10.1007/3-540-68697-5_16}.

\bibitem[{\citenamefont{Halevi}(1999)}]{Halevi99}
\bibinfo{author}{\bibfnamefont{S.}~\bibnamefont{Halevi}}, \bibinfo{journal}{J.
  Cryptol.} \textbf{\bibinfo{volume}{12}}, \bibinfo{pages}{77}
  (\bibinfo{year}{1999}), ISSN \bibinfo{issn}{0933-2790},
  \urlprefix\url{http://dx.doi.org/10.1007/PL00003821}.

\bibitem[{\citenamefont{Lo and Chau}(1997)}]{LC97}
\bibinfo{author}{\bibfnamefont{H.-K.} \bibnamefont{Lo}} \bibnamefont{and}
  \bibinfo{author}{\bibfnamefont{H.~F.} \bibnamefont{Chau}},
  \bibinfo{journal}{Phys. Rev. Lett.} \textbf{\bibinfo{volume}{78}},
  \bibinfo{pages}{3410} (\bibinfo{year}{1997}).

\bibitem[{\citenamefont{Mayers}(1997)}]{May97}
\bibinfo{author}{\bibfnamefont{D.}~\bibnamefont{Mayers}},
  \bibinfo{journal}{Phys. Rev. Lett.} \textbf{\bibinfo{volume}{78}},
  \bibinfo{pages}{3414} (\bibinfo{year}{1997}).

\bibitem[{\citenamefont{Damgard et~al.}(2008)\citenamefont{Damgard, Fehr,
  Salvail, and Schaffner}}]{DFSS08}
\bibinfo{author}{\bibfnamefont{I.~B.} \bibnamefont{Damgard}},
  \bibinfo{author}{\bibfnamefont{S.}~\bibnamefont{Fehr}},
  \bibinfo{author}{\bibfnamefont{L.}~\bibnamefont{Salvail}}, \bibnamefont{and}
  \bibinfo{author}{\bibfnamefont{C.}~\bibnamefont{Schaffner}},
  \bibinfo{journal}{SIAM J. Comput.} \textbf{\bibinfo{volume}{37}},
  \bibinfo{pages}{1865} (\bibinfo{year}{2008}), ISSN \bibinfo{issn}{0097-5397}.

\bibitem[{\citenamefont{Wehner et~al.}(2008)\citenamefont{Wehner, Schaffner,
  and Terhal}}]{WST08}
\bibinfo{author}{\bibfnamefont{S.}~\bibnamefont{Wehner}},
  \bibinfo{author}{\bibfnamefont{C.}~\bibnamefont{Schaffner}},
  \bibnamefont{and} \bibinfo{author}{\bibfnamefont{B.~M.}
  \bibnamefont{Terhal}}, \bibinfo{journal}{Phys. Rev. Lett.}
  \textbf{\bibinfo{volume}{100}}, \bibinfo{pages}{220502}
  (\bibinfo{year}{2008}).

\bibitem[{\citenamefont{Kent}(1999)}]{Kent99}
\bibinfo{author}{\bibfnamefont{A.}~\bibnamefont{Kent}}, \bibinfo{journal}{Phys.
  Rev. Lett.} \textbf{\bibinfo{volume}{83}}, \bibinfo{pages}{1447}
  (\bibinfo{year}{1999}),
  \urlprefix\url{http://link.aps.org/doi/10.1103/PhysRevLett.83.1447}.

\bibitem[{\citenamefont{Kent}(2011)}]{Kent11}
\bibinfo{author}{\bibfnamefont{A.}~\bibnamefont{Kent}}, \bibinfo{journal}{New
  Journal of Physics} \textbf{\bibinfo{volume}{13}}, \bibinfo{pages}{113015}
  (\bibinfo{year}{2011}).

\bibitem[{\citenamefont{Croke and Kent}(2012)}]{CK12}
\bibinfo{author}{\bibfnamefont{S.}~\bibnamefont{Croke}} \bibnamefont{and}
  \bibinfo{author}{\bibfnamefont{A.}~\bibnamefont{Kent}},
  \bibinfo{journal}{Phys. Rev. A} \textbf{\bibinfo{volume}{86}},
  \bibinfo{pages}{052309} (\bibinfo{year}{2012}).

\bibitem[{\citenamefont{Kent}(2012)}]{Kent12}
\bibinfo{author}{\bibfnamefont{A.}~\bibnamefont{Kent}}, \bibinfo{journal}{Phys.
  Rev. Lett.} \textbf{\bibinfo{volume}{109}}, \bibinfo{pages}{130501}
  (\bibinfo{year}{2012}),
  \urlprefix\url{http://link.aps.org/doi/10.1103/PhysRevLett.109.130501}.

\bibitem[{\citenamefont{Kaniewski et~al.}(2013)\citenamefont{Kaniewski,
  Tomamichel, Hanggi, and Wehner}}]{KTH13}
\bibinfo{author}{\bibfnamefont{J.}~\bibnamefont{Kaniewski}},
  \bibinfo{author}{\bibfnamefont{M.}~\bibnamefont{Tomamichel}},
  \bibinfo{author}{\bibfnamefont{E.}~\bibnamefont{Hanggi}}, \bibnamefont{and}
  \bibinfo{author}{\bibfnamefont{S.}~\bibnamefont{Wehner}},
  \bibinfo{journal}{Information Theory, IEEE Transactions on}
  \textbf{\bibinfo{volume}{59}}, \bibinfo{pages}{4687} (\bibinfo{year}{2013}).

\bibitem[{\citenamefont{Lunghi et~al.}(2014)\citenamefont{Lunghi, Kaniewski,
  Bussieres, Houlmann, Tomamichel, Wehner, and Zbinden}}]{LKB14}
\bibinfo{author}{\bibfnamefont{T.}~\bibnamefont{Lunghi}},
  \bibinfo{author}{\bibfnamefont{J.}~\bibnamefont{Kaniewski}},
  \bibinfo{author}{\bibfnamefont{F.}~\bibnamefont{Bussieres}},
  \bibinfo{author}{\bibfnamefont{R.}~\bibnamefont{Houlmann}},
  \bibinfo{author}{\bibfnamefont{M.}~\bibnamefont{Tomamichel}},
  \bibinfo{author}{\bibfnamefont{S.}~\bibnamefont{Wehner}}, \bibnamefont{and}
  \bibinfo{author}{\bibfnamefont{H.}~\bibnamefont{Zbinden}},
  \bibinfo{journal}{arXiv preprint arXiv:1411.4917}  (\bibinfo{year}{2014}).

\bibitem[{\citenamefont{Simard}(2007)}]{sim07}
\bibinfo{author}{\bibfnamefont{J.-R.} \bibnamefont{Simard}}, Master's thesis,
  \bibinfo{school}{McGill University} (\bibinfo{year}{2007}).

\bibitem[{\citenamefont{Buhrman and Massar}(2005)}]{BM05}
\bibinfo{author}{\bibfnamefont{H.}~\bibnamefont{Buhrman}} \bibnamefont{and}
  \bibinfo{author}{\bibfnamefont{S.}~\bibnamefont{Massar}},
  \bibinfo{journal}{Phys. Rev. A} \textbf{\bibinfo{volume}{72}},
  \bibinfo{pages}{052103} (\bibinfo{year}{2005}),
  \urlprefix\url{http://link.aps.org/doi/10.1103/PhysRevA.72.052103}.

\bibitem[{\citenamefont{Lawson et~al.}(2010)\citenamefont{Lawson, Linden, and
  Popescu}}]{LLP10}
\bibinfo{author}{\bibfnamefont{T.}~\bibnamefont{Lawson}},
  \bibinfo{author}{\bibfnamefont{N.}~\bibnamefont{Linden}}, \bibnamefont{and}
  \bibinfo{author}{\bibfnamefont{S.}~\bibnamefont{Popescu}},
  \bibinfo{journal}{arXiv preprint arXiv:1011.6245}  (\bibinfo{year}{2010}).

\bibitem[{\citenamefont{Brunner et~al.}(2014)\citenamefont{Brunner, Cavalcanti,
  Pironio, Scarani, and Wehner}}]{BCP14}
\bibinfo{author}{\bibfnamefont{N.}~\bibnamefont{Brunner}},
  \bibinfo{author}{\bibfnamefont{D.}~\bibnamefont{Cavalcanti}},
  \bibinfo{author}{\bibfnamefont{S.}~\bibnamefont{Pironio}},
  \bibinfo{author}{\bibfnamefont{V.}~\bibnamefont{Scarani}}, \bibnamefont{and}
  \bibinfo{author}{\bibfnamefont{S.}~\bibnamefont{Wehner}},
  \bibinfo{journal}{Rev. Mod. Phys.} \textbf{\bibinfo{volume}{86}},
  \bibinfo{pages}{419} (\bibinfo{year}{2014}).

\bibitem[{\citenamefont{Bavarian and Shor}(2015)}]{BS15}
\bibinfo{author}{\bibfnamefont{M.}~\bibnamefont{Bavarian}} \bibnamefont{and}
  \bibinfo{author}{\bibfnamefont{P.~W.} \bibnamefont{Shor}}, in
  \emph{\bibinfo{booktitle}{Proceedings of the 2015 Conference on Innovations
  in Theoretical Computer Science}} (\bibinfo{publisher}{ACM},
  \bibinfo{address}{New York, NY, USA}, \bibinfo{year}{2015}), ITCS '15, pp.
  \bibinfo{pages}{123--132}, ISBN \bibinfo{isbn}{978-1-4503-3333-7},
  \urlprefix\url{http://doi.acm.org/10.1145/2688073.2688112}.

\bibitem[{\citenamefont{Kilian}(1988)}]{Kil88}
\bibinfo{author}{\bibfnamefont{J.}~\bibnamefont{Kilian}}, in
  \emph{\bibinfo{booktitle}{STOC '88: Proceedings of the twentieth annual ACM
  symposium on Theory of computing}} (\bibinfo{publisher}{ACM Press},
  \bibinfo{address}{New York, NY, USA}, \bibinfo{year}{1988}), pp.
  \bibinfo{pages}{20--31}, ISBN \bibinfo{isbn}{0-89791-264-0}.

\bibitem[{\citenamefont{Fehr and Fillinger}(2015)}]{FF15}
\bibinfo{author}{\bibfnamefont{S.}~\bibnamefont{Fehr}} \bibnamefont{and}
  \bibinfo{author}{\bibfnamefont{M.}~\bibnamefont{Fillinger}},
  \bibinfo{journal}{arXiv preprint arXiv:1507.XXXXX}  (\bibinfo{year}{2015}).

\end{thebibliography}
\end{document}